\DeclareMathOperator*{\Max}{Max}
\newtheorem{theorem}{Theorem}[section]
\begin{document}

	\title{Absolute fully entangled fraction from spectrum}

	\author{Tapaswini Patro}
	\email[]{p20190037@hyderabad.bits-pilani.ac.in}
	
	\affiliation{Department of Mathematics, Birla Institute of Technology and Science-Pilani, Hyderabad Campus,Telangana,India}
	
	\author{Kaushiki Mukherjee}
	\email[]{kaushiki.wbes@gmail.com}
	
	\affiliation{Department of Mathematics, Government Girls' General Degree College, Ekbalpore, Kolkata, India}
	
	\author{Mohd Asad Siddiqui}
	\email[]{asad@ctp-jamia.res.in}
	
	\affiliation{Institute of Fundamental and Frontier Sciences, University of Electronic Science and Technology of China, Chengdu 610051, China}
	
	\author{Indranil Chakrabarty}
	\email[]{indranil.chakrabarty@iiit.ac.in}
	
	\affiliation{Center for Quantum Science and Technology and Center for Security, Theory and Algorithmic Research, International Institute of Information Technology-Hyderabad, Gachibowli, Telangana, India}
	
	\author{Nirman Ganguly}
	\email[]{nirmanganguly@hyderabad.bits-pilani.ac.in}
	
	\affiliation{Department of Mathematics, Birla Institute of Technology and Science-Pilani, Hyderabad Campus,Telangana,India}

	\begin{abstract}
		Fully entangled fraction (FEF) is a significant figure of merit for density matrices. In bipartite $ d \otimes d $ quantum systems, the threshold value  FEF $ > 1/d $, carries significant implications for quantum information processing tasks. Like separability, the value of FEF is also related to the choice of global basis of the underlying Hilbert space. A state having its FEF $ \le 1/d $, might give a value $ > 1/d $ in another global basis. A change in the global basis corresponds to a global unitary action on the quantum state. In the present work, we find that there are quantum states whose FEF remains less than $ 1/d $, under the action of any global unitary i.e., any choice of global basis. We invoke the hyperplane separation theorem to demarcate the set from states whose FEF can be increased beyond $ 1/d $ through global unitary action. Consequent to this, we probe the marginals of a pure three party system in qubits. We observe that under some restrictions on the parameters, even if two parties collaborate (through unitary action on their combined system) they will not be able to breach the FEF threshold. The study is further extended to include some classes of mixed three qubit and three qutrit systems. Furthermore, the implications of our work pertaining to $ k- $copy nonlocality and teleportation are also investigated.     
	\end{abstract}

	\maketitle

	\section{Introduction}
	\par Quantum information theory \cite{nie} promises to offer significant advantages over conventional classical procedures. It relies on correlations significantly different from the ones possible in the classical realm. One such class of correlations is exhibited by quantum states which are entangled. Quantum entanglement \cite{ent}, is said to be possessed by states which cannot be written as a convex combination of separable(not entangled) ones. Entanglement is necessary for protocols like teleportation \cite{tele}, entanglement swapping \cite{swap}, dense coding, \cite{dense} and in certain cryptographic schemes \cite{QC1}. Thus entanglement can be considered as a resource and hence studies on entanglement have ranged from its applications as mentioned above to more foundational issues \cite{ent}. Two decades back researchers have witnessed the existence of correlations that go beyond the notion of entanglement \cite{oneq}. Later authors came up with various theoretical measures of such  correlations and named them as discord \cite{disc1, disc2}, dissonance \cite{disso}, dissension \cite{dissen, dissen1}.  
	\par In the entanglement-separability paradigm, one such foundational issue is the absolute separability problem \cite{knill}. The ability of a quantum state to be entangled depends on the choice of the basis of the underlying Hilbert space \cite{Hahn}. A state which is entangled in one basis might be separable in another basis. However, there are states which remain separable under any change of basis within a maximal ball around the maximally mixed state, which was first observed in \cite{purity1}. Such states are termed as absolutely separable \cite{purity2}. It can be seen that, absolutely separable states $ \sigma_{as} $ are specifically those for which $ U \sigma_{as} U^\dagger $ is separable for all possible global unitary operations $ U $ and hence any criteria to detect absolutely separable states will be based on the spectrum of density matrices. Separability of states close to maximally mixed states have been studied \cite{mixed} and found to be absolutely separable for a certain purity threshold \cite{purity1}. Necessary and sufficient conditions based on the spectrum have been found to detect absolute separability for two qubits and qubit-qudit states \cite{Verst,johnston}. However, literature still lacks a complete criterion for any arbitrary dimension \cite{Absolute separability}. 
	\par Since, generation of entanglement from previously separable states is an area of active experimental work \cite{genent}, the set containing absolutely separable states was characterized in \cite{Ganguly14}. The work \cite{Ganguly14} pertained to the detection of non-absolutely separable states, through linear hermitian operators from which entanglement can be generated. This study was further extended through the study on extreme points of the absolute separable class \cite{extreme} and non-linear detection of non-absolutely separable states \cite{patra}. The idea of absolute class of states is not only restricted to separable states but had also been shown in the context of Bell-CHSH local states \cite{Bell-CHSH} and also for states with non negative conditional entropies \cite{spatro, vem,vem1}.
	\par As noted before, entanglement being a necessary condition for quantum information protocols, its mere presence does not guarantee success. We need to look for different manifestations of entangled states. One key feature associated with quantum states is its fully entangled fraction(FEF) \cite{horodecki}. It measures the proximity of a quantum state to maximally entangled states \cite{properties}. For bipartite systems in $ d \otimes d $ dimensions, the threshold value of FEF $ > \frac{1}{d}  $ is a significant benchmark as quantum states with such a figure of merit prove to be useful in teleportation \cite{wernerstate} and entanglement swapping \cite{swap}. Some significant works on FEF can be found in \cite{upper bound,Grondalski,maximum eigenvalue,wernerstate,frobenous}. A prescription of detecting entangled states with FEF $ > \frac{1}{d}  $ was laid in \cite{Ganguly11}. 
	\par However, FEF of a state also gets perturbed with the change in the basis of the Hilbert space. A state with FEF$ \le \frac{1}{d} $ in one basis can have its FEF increased beyond $ \frac{1}{d} $ through global unitary action. In the present work, we introduce the notion of states having absolute fully entangled fraction, specifically it is the class of states whose FEF cannot be increased beyond $ \frac{1}{d} $ even with global unitary action. A mathematical definition of the class is given in section \ref{pre}. The class is shown to be convex and compact. This paves the way for constructing suitable operators (witnesses) to detect states whose FEF can be increased beyond the threshold value $ \frac{1}{d} $. The identification thus buttresses the generation of entangled states efficient for teleportation,through action of suitable unitary gates akin to entanglement generation. Illustrative examples are given with experimental prescription of the witness operators pertaining to both qubit and qutrit systems. A necessary and sufficient criterion based on the spectrum of states is derived to ascertain membership in the class. Separable states have their FEF bounded by $ \frac{1}{d} $. Hence, absolutely separable states cannot have their FEF increased beyond $ \frac{1}{d} $ by global unitary operations. This establishes that the class of absolutely separable states is a subset of the class pertaining to absolute fully entangled fraction. We show that this inclusion is strict. For some particular classes of states, characterization is also made in terms of the Bloch parameters. A numerical study based on purity of states is further carried out to highlight the necessary underpinnings. We note that all the analyses have been done for states living in bipartite systems. 
	\par The act of a global unitary action makes sense when two parties collaborate to do the operation. Consider that a tripartite system is shared between three parties and the situation when two of them collaborate. We show here that under certain restrictions, even if two parties collaborate they will not be able to breach the FEF threshold(FEF $ \le 1/d $) with a joint unitary action. This study is done for all three qubit pure states and some mixed states in three qubits and three qutrits. Our work has implications for teleportation and $ k- $ copy nonlocality as  FEF $ > 1/d $ is a significant benchmark for those tasks. We show how we can identify states which can be made useful for the tasks using a global unitary action.  
	\par Our work is structured in the following way in this paper: In sec.\ref{pre} we revisit the related definitions and establish the notations to be used in the paper. In sec. \ref{charac}, we discuss the criterion for membership in $ \mathfrak{AF} $[set containing states having absolute fully entangled fraction] in a two-qudit system. We characterize $ \mathfrak{AF} $ here and invoke the hyperplane separation theorem for the construction of witness operators. Sec. \ref{decom} discusses relevant illustrations and also the decomposition of the operator in terms of local observables. Characterization of the set in terms of Bloch parameters and bounds on purity are laid down in sections \ref{bloch} and \ref{purity} respectively. Marginals of three party systems and the issues pertaining to $ k- $ copy nonlocality and teleportation are investigated in sec. \ref{appl}. Finally, in sec. \ref{con}, we note our concluding remarks. 
	\section{Preliminaries and Notations}\label{pre}
	In this section, we introduce some preliminary concepts and some notations that are required for this article. Here $ \mathsf{H_{AB}} $ denotes the finite dimensional Hilbert space of the composite system $ \mathsf{AB} $. $ \mathfrak{B}(\mathsf{H}) $ denotes the bounded linear operators acting on the Hilbert space $ \mathsf{H} $.\\
	
	\textbf{Fully Entangled Fraction:} For a density matrix $ \rho_{d \otimes d} \in \mathfrak{B}(\mathsf{H_{AB}}) $, the fully entangled fraction(FEF) is given as \cite{horodecki}, 
	\begin{equation}
		F(\rho_{d \otimes d})=\Max\limits_{U_l}\:\langle \psi_{+}|(I\otimes U_l^{\dagger})\: \rho_{d \otimes d} \:(I\otimes U_l)|\psi_{+}\:\rangle
	\end{equation}
	the maximization being done over all local unitary operators $ U_l $ and $|\psi_{+}\rangle =\frac{1}{\sqrt{d}}\:\sum_{i=0}^{d-1}\:|ii \rangle $ is the maximally entangled state.
	\par $ \mathfrak{F} $ denotes the set of density matrices whose FEF is bounded by $ \frac{1}{d} $, i.e., $ \mathfrak{F} = \lbrace \rho_{d \otimes d} \in \mathfrak{B}(\mathsf{H_{AB}}): F(\rho_{d \otimes d}) \le \frac{1}{d} \rbrace $. If $ U_{nl} $ denotes a non-local unitary operator, then the states having absolute FEF is denoted by $ \mathfrak{AF} = \lbrace \sigma_{af} \in \mathfrak{B}(\mathsf{H_{AB}}): F(U_{nl} \sigma_{af} U_{nl}^\dagger) \le \frac{1}{d}, \forall U_{nl} \rbrace $. One of the objectives of the present paper is to characterize $ \mathfrak{AF} $ and address the membership problem in the class.\\
	
	\textbf{Absolute Separable States ($ \mathfrak{AS} $) :}
	Unlike a single qubit system,  two or more qubits can be entangled or separable in nature. It depends on the basis of the underlying composite Hilbert space \cite{Hahn}. A state in one basis can be entangled, while it can be separable in another basis. Here, it is interesting to note that there are states which are separable in all basis. Such states are known as Absolute Separable states ($ \mathfrak{AS} $)  \cite{purity2}. In other words we can interpret absolutely separable states $ \sigma_{as} $ as those states for which $ U \sigma_{as} U^\dagger $ is separable for all possible global unitary operations $ U $.
	
	\textbf{Absolute Bell- CHSH local states: } A state is considered to be Bell-CHSH local if it does not violate the Bell-CHSH inequality. A necessary and sufficient condition to satisfy the Bell-CHSH inequality for two qubits is $ M(\rho_l) \le 1 $, where $ M(\rho_l) $ is the sum of the largest two eigenvalues of $ T^\dagger T $, $ T $ being the correlation matrix for the two qubit density matrix $ \rho_l $ \cite{bell}. A state $ \sigma_{al} $ is said to be absolutely Bell-CHSH local if it does not violate the Bell-CHSH inequality even after non-local unitary action \cite{Bell-CHSH}. Precisely, the absolute Bell-CHSH local set is denoted by $ \mathfrak{AL} = \lbrace \sigma_{al} : M(U_{nl} \sigma_{al} U_{nl}^\dagger) \le 1 \rbrace $.\\
	
	\textbf{Absolute Conditional Von Neumann Entropy Non Negative States: }The von Neumann entropy of a quantum state $ \rho_{ab} $ is denoted by $ S(\rho_{ab}) = -tr(\rho_{ab} \log_2 \rho_{ab}) $, with the conditional entropy as $ C(\rho_{ab}) = S(\rho_{ab}) - S(\rho_{b}) $. The set $ \mathfrak{AC} $ represents the two qubit states whose conditional entropy remains non-negative even under global unitary action. A characterization of $ \mathfrak{AC} $ was done in \cite{spatro}. 
	\section{Characterization of $ \mathfrak{AF} $}\label{charac}
	In this section we find a restriction on the spectrum of a density matrix to decide it's membership in $ \mathfrak{AF} $.\\ 
	For density matrices in $ d \otimes d $ dimensions, a state belongs to $ \mathfrak{AF} $, if $ \langle v_{max} | U \rho_{d \otimes d} U^\dagger | v_{max} \rangle \le \frac{1}{d} $ for any $ U $, where $ | v_{max} \rangle  $ is a maximally entangled state. This is equivalent to $  \langle w_{pure} |  \rho_{d \otimes d}  | w_{pure} \rangle \le \frac{1}{d} $ for all pure states $ |w_{pure} \rangle  $. This further entails that all the eigenvalues of $ \rho_{d \otimes d} $ is $ \le \frac{1}{d} $. Hence we have the following theorem, 
	\begin{theorem}
		A two qudit state $ \rho_{d \otimes d} \in \mathfrak{AF} $ iff  $ \lambda_{max} \le \frac{1}{d} $, where $ \lambda_{max} $ is the maximum eigenvalue of $ \rho_{d \otimes d} $.
	\end{theorem}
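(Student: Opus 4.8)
The plan is to recast the membership condition for $\mathfrak{AF}$ as a bound on the numerical range of $\rho_{d\otimes d}$, and then to read off the answer from the variational (Rayleigh--Ritz) characterization of the largest eigenvalue. The informal chain already sketched in the text (``max overlap with maximally entangled states'' $\Rightarrow$ ``max overlap with all pure states'' $\Rightarrow$ ``largest eigenvalue'') is exactly the skeleton I would make rigorous in three moves.

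First I would note that the maximization over local unitaries in the definition of the FEF is precisely a maximization over the set $\mathcal{M}$ of all maximally entangled states. Indeed, every maximally entangled state can be written as $(I\otimes U_l)|\psi_+\rangle$ for some local unitary $U_l$, since such a state has all Schmidt coefficients equal to $1/\sqrt{d}$ and hence differs from $|\psi_+\rangle$ only by a change of orthonormal basis on the second subsystem. Consequently $F(\rho_{d\otimes d})=\max_{|v\rangle\in\mathcal{M}}\langle v|\rho_{d\otimes d}|v\rangle$.

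Next I would absorb the global unitary into the argument of the quadratic form. Writing the conjugated FEF explicitly,
\begin{equation}
	F(U_{nl}\,\rho_{d\otimes d}\,U_{nl}^{\dagger})=\max_{|v\rangle\in\mathcal{M}}\langle U_{nl}^{\dagger}v|\,\rho_{d\otimes d}\,|U_{nl}^{\dagger}v\rangle,
\end{equation}
and then taking the supremum over all global unitaries $U_{nl}$, the central step is to identify the orbit $\{\,U_{nl}^{\dagger}|v\rangle : U_{nl}\text{ global unitary},\ |v\rangle\in\mathcal{M}\,\}$ with the whole unit sphere of $\mathsf{H_{AB}}$. The inclusion into the unit sphere is immediate since unitaries preserve the norm; for the reverse, given any pure state $|w\rangle$ one picks a global unitary $U_{nl}$ with $U_{nl}|w\rangle=|\psi_+\rangle$, which exists because the unitary group acts transitively on unit vectors, so that $U_{nl}^{\dagger}|\psi_+\rangle=|w\rangle$. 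This yields
\begin{equation}
	\sup_{U_{nl}}F(U_{nl}\,\rho_{d\otimes d}\,U_{nl}^{\dagger})=\sup_{\langle w|w\rangle=1}\langle w|\rho_{d\otimes d}|w\rangle .
\end{equation}

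Finally I would invoke the Rayleigh--Ritz principle: for the Hermitian positive semidefinite operator $\rho_{d\otimes d}$ the supremum of $\langle w|\rho_{d\otimes d}|w\rangle$ over unit vectors equals $\lambda_{max}$ and is attained. Because the global unitary group is compact and the relevant map is continuous, the supremum over $U_{nl}$ is likewise a maximum, so the defining requirement ``$F(U_{nl}\rho_{d\otimes d}U_{nl}^{\dagger})\le 1/d$ for all $U_{nl}$'' is equivalent to $\lambda_{max}\le 1/d$, which is the claim. The step I expect to be the main obstacle is the orbit identification: one must argue cleanly that local unitaries already exhaust $\mathcal{M}$ and that the subsequent global conjugation sweeps $\mathcal{M}$ across every pure state, while taking care that the condition is stated for \emph{all} $U_{nl}$, so that attainment of the supremum is what guarantees the non-strict inequality transfers correctly.
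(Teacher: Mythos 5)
Your proposal is correct and takes essentially the same route as the paper: the paper's own proof is exactly your three-step chain (local unitaries exhaust the set of maximally entangled states, global conjugation sweeps that set over the entire unit sphere, and the Rayleigh--Ritz principle identifies $\sup_{\langle w|w\rangle=1}\langle w|\rho_{d\otimes d}|w\rangle$ with $\lambda_{max}$), given there only as a two-line sketch. Your write-up merely makes explicit the details the paper leaves implicit, namely the orbit identification and the transfer of the non-strict inequality via attainment of the supremum.
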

	
	\par In what follows below, we give a functional analytic characterization of $ \mathfrak{AF} $. This will allow for the construction of witness operators that can detect the states whose FEF can be increased beyond $\frac{1}{d}$ after the application of global unitary operator. These are potentially \textit{useful} states for various quantum information processing protocols like teleportation.
	
	\par The geometric form of the Hahn-Banach theorem enables one to construct hyper-planes separating a set from a point outside it. However, its application requires that the set must be convex and compact, which we prove below for the set of our concern $ \mathfrak{AF} $.	
	\begin{theorem}
		$ \mathfrak{AF} $ is convex and compact
	\end{theorem}
	
	\begin{proof}
		(i) $ \mathfrak{AF} $ is convex: \\
		
		Let $\rho_{1},\rho_{2}\in$ $ \mathfrak{AF} $. 
		Let $\rho_{f}=\lambda \rho_{1} + (1-\lambda)\rho_{2}$ ; $\lambda\in[0,1]$. Consider an arbitrary unitary operator $U$. Then,
		\begin{align}
			U \rho_{f} U^{\dagger}= U(\lambda \rho_{1} + (1-\lambda)\rho_{2}) U^{\dagger}
		\end{align}
		\begin{equation} 
			\begin{split}
				F(U \rho_{f} U^{\dagger}) & =  F(U(\lambda \rho_{1} + (1-\lambda)\rho_{2}) U^{\dagger}) \\
				& = \lambda F(U\rho_{1} U^{\dagger})+(1-\lambda)F(U\rho_{2} U^{\dagger})
				\le \frac{1}{d}
			\end{split}
		\end{equation}
		Hence, $ \mathfrak{AF} $ is convex.\\
		
		(ii) $ \mathfrak{AF} $ is compact:\\
		Here, one can retrace the steps done in \cite{Ganguly14}, to prove that an arbitrary limit point of $ \mathfrak{AF} $ is convex will belong to the set itself, proving it to be closed. Compactness follows from the fact that $ \mathfrak{AF} $ is a subset of the compact set $ \mathfrak{F} $ \cite{closed}, whose FEF is bounded above by $ \frac{1}{d} $.
	\end{proof}
	
	\par The theorem above now entails that there exists witness operators $ \mathbf{S} $ with the following two properties, \\
	(i)  $ Tr[\mathbf{S} \sigma] \ge 0, \forall \sigma \in \mathfrak{AF}  $  \\
	(ii) $\exists \rho \in \mathfrak{F}-\mathfrak{AF},  Tr[\mathbf{S} \rho] < 0 $ \\
	Here in this work, we also give a prescription to construct the witness. One such example is given below: \\
	Consider $ \rho \in \mathfrak{F}-\mathfrak{AF} $. Therefore, there exists a unitary operator $ U $, such that $ F(U \rho U^\dagger) > \frac{1}{d} $. Hence, there exists a teleportation witness operator $ W $ that detects $ U \rho U^\dagger $ \cite{Ganguly11}. Now, consider $ \mathbf{S} = U^\dagger W U $. 
	It is now easy to see that $ Tr[\mathbf{S} \rho ]= Tr[U^\dagger W U \rho]= Tr[W U \rho U^\dagger ] <0 $.\\
	If $ \sigma \in \mathfrak{AF} $, then $ Tr[\mathbf{S} \sigma]= Tr[U^\dagger W U \sigma] = Tr[ W U \sigma U^\dagger] \ge 0$. The last inequality follows from the fact that $ W $ being a teleportation witness will give a non-negative expectation value over all states belonging to $ \mathfrak{F} $.
	\section{Illustrations and Decomposition of the Witness Operator}\label{decom}
	Here, in this section, we provide illustrations from different dimensions on the existence of states with absolute FEF and also the application of the witness operators constructed above to detect states whose FEF can be enhanced beyond $ \frac{1}{d} $, by global unitary operations. 
    \par \textit{States in two qubits-}
	As a first example, consider the following two-qubit state $X_{1}$,
	\begin{align}
		X_{1}=\frac{2}{9}\:|\phi_{2}^{+}\rangle \langle \phi_{2}^{+}|+\frac{1}{9}\:|01\rangle \langle 01|+\frac{1}{9}\:|10\rangle \langle 10|+\frac{5}{9}\:|00\rangle \langle 00|,
	\end{align}
	where $|\phi_{2}^{+}\rangle =\frac{1}{\sqrt{2}}\:\left (|00 \rangle +|11 \rangle \right )$. As the FEF of the density matrix $X_{1}$ is $\frac{1}{2}$ \cite{upper bound}, it is not useful for teleportation under the standard protocol \cite{wernerstate, horodecki}.\\
	However, the application of the unitary 
	\begin{equation}
		U_{1}=\begin{pmatrix}
			\frac{1}{\sqrt{2}} & 0 &  0& -\frac{1}{\sqrt{2}}\\ 
			0 &1  &  0&0 \\ 
			0 & 0 &  1&0\\ 
			\frac{1}{\sqrt{2}} &0  & 0 &\frac{1}{\sqrt{2}} 
		\end{pmatrix}
	\end{equation}
	results in the modified state $X_{1}^{'}= U_{1}\:X_{1}\:U_{1}^{\dagger}$. The density matrix representation of the state is given by,
	\begin{align*}
		X_{1}^{'}=\frac{5}{9}\:|\phi_{2}^{+}\rangle \langle \phi_{2}^{+}|+\frac{1}{9}\:|01 \rangle \langle 01|+\frac{1}{9}|10\rangle \langle 10|+\frac{2}{9}\:|11\rangle\langle 11|,
	\end{align*}
	and the FEF is $\frac{2}{3}$.
	The witness for the original state is given by, 
	\begin{align}
		\mathbf{S_{1}}=\frac{1}{2}\:\left (|01 \rangle \langle 01|+|10\rangle \langle 10|+|11 \rangle \langle 11|-|00\rangle \langle 00|\right ) \label{sepwit}
	\end{align}
	Now, for this witness we have $Tr(\mathbf{S_{1}}\:X_{1})=-\frac{1}{6} < 0$, thereby detecting that $ X_{1} \in \mathfrak{F}-\mathfrak{AF} $. It further indicates that the state $ X_{1} $ can be made useful for teleportation through non-local unitary action.\\ 
	
	Next, consider the state $ X_{2}$ \cite{Activating}
	\begin{align}
		X_{2}=q\:|\phi_{2}^{+}\rangle \langle \phi_{2}^{+}|+(1-q)\:|0\rangle \langle 0|\otimes |1\rangle \langle 1|,
	\end{align}
	where $|\phi_{2}^{+}\rangle =\frac{1}{\sqrt{2}}\:\left (|00 \rangle +|11 \rangle \right )$ and  $q\in (0,1]$. The FEF of $\ X_{2}$ is  greater than $\frac{1}{2}$ for $q\in (\frac{1}{2},1]$ \cite{Activating} and for $q\in (0,\frac{1}{2}] $, it's FEF is $\leq \frac{1}{2}$.
	Let us consider a global unitary operator,
	\begin{align}
		U_{2}=\frac{1}{2}\:\begin{pmatrix}
			-1 & \sqrt{2} &  0& -1\\ 
			0 &0  &  2&0 \\ 
			-\sqrt{2} & 0 &  0&\sqrt{2}\\ 
			1 &\sqrt{2}  & 0 &1
		\end{pmatrix}
	\end{align}
	Now, after the application of the global unitary operator $U_2$ on the state $X_2$ we have the transformed state as, 
	\begin{eqnarray}
		X_{2}^{'}=&&\frac{1}{2}\:\left (|00\rangle \langle 00|+|11\rangle \langle 11|\right)+{}\nonumber\\&& \left ( \frac{1}{2}-q\right )\:\left (|00\rangle \langle 11|+|11\rangle \langle 00|\right ),
	\end{eqnarray}
	where 	$X_{2}^{'}= U_{2}\:X_{2}\:U_{2}^{\dagger}$. FEF of $X_{2}^{'}$ is $\frac{1}{2}\:\left (1+\sqrt{(-1+2q)^2}\right )$. Hence, for $ 0< q < \frac{1}{2}$ and $\frac{1}{2}< q \leq 1$, its FEF  $ > \frac{1}{2} $. The witness for this state is, 
	\begin{align}
		\mathbf{S_{2}}=\frac{1}{2}\:\left (|00\rangle \langle 00|-|01 \rangle \langle 01|+|10\rangle \langle 10|+|11 \rangle \langle 11|\right ), \label{entwit}
	\end{align}
	which gives, $Tr(\mathbf{S_{2}} \:X_{2})=q-\frac{1}{2} < 0$,  when $q<\frac{1}{2}$. 
	
	\par \textit{State in two qutrits-}	Next, in this subsection we consider from the two qutrits system. One such example is the state $Y_{3}$ which is given by, \cite{Activating}, 
	\begin{align}\label{kcopy}
		Y_{3}=q\:|\phi_{3}^{+}\rangle \langle \phi_{3}^{+}|+(1-q)\:|0\rangle \langle 0|\otimes |1\rangle \langle 1|,
	\end{align}
	where $|\phi_{3}^{+}\rangle =\frac{1}{\sqrt{3}}\:\left (|00 \rangle +|11 \rangle+|22 \rangle  \right )$ and  $q\in (0,1]$. The FEF of $\ Y_{3}$ is  greater than $\frac{1}{3}$ for $q\in (\frac{1}{3},1]$ \cite{Activating} and for $q\in (0,\frac{1}{3}] $, its FEF is $\leq \frac{1}{3}$.
	Let us consider a global unitary operator $U_3$ given by,
	\begin{align}
		U_{3}=\frac{1}{2}\:\begin{pmatrix}
			-1 & \sqrt{2} & 0 & 0 & 0 & 0 & 0 & 0 & -1 \\ 
			0&0  &2  &0  &0 &0  & 0 &0  &0 \\ 
			0&0  &0  &2  &0  &0  &0  &0  &0 \\ 
			0&0  &0  &0  &2  &0  &0  &0  &0 \\ 
			0& 0 &0  &0  &0  &2  &0  &0  &0 \\ 
			0& 0 &0  &0  &0  &0  &2  & 0 &0 \\ 
			0& 0 &0  &0  &0  &0  & 0 &2  &0 \\ 
			-\sqrt{2}&0  &0  &0  &0  &0  &0  &0  &\sqrt{2} \\ 
			1& \sqrt{2} & 0 &0  &0  &0  &0  &0  & 1
		\end{pmatrix}
	\end{align}
	
	The state $Y_{3}^{'}$ after the application of the global unitary operator $U_3$ is given by,

	\begin{multline*}
			Y_{3}^{'}= U_{3}\:Y_{3}\:U_{3}^{\dagger}\\
		 =\frac{3-q}{6}\:\left (|00\rangle \langle 00|+|22\rangle \langle 22|\right)+\frac{3-5q}{6}\:\left (|00\rangle \langle 22|+|22\rangle \langle 00|\right)+\\\frac{q}{3}\:\left (|10\rangle \langle 10|+|10 \rangle \langle 22|+|22\rangle \langle 10|\right) -
		\frac{q}{3}\:\left (|00\rangle \langle 10|+|10 \rangle \langle 00|\right)
	\end{multline*}

	For the state $ Y_3 $ a suitable witness operator is $\mathbf{S_3}=U_{3}^{\dagger} W U_{3}$ as it gives $Tr(\mathbf{S_{3}}.Y_{3})=\frac{-1+2q}{3}<0$ for $q\in (0,\frac{1}{3}]$. Here, $ W $ represents the teleportation witness operator in two qutrits given in \cite{Ganguly11}. 
	
	\par \textit{Isotropic state in two qudits-} The general expression for a bipartite isotropic state in $d\times d$ dimensions is,  
	\begin{align*}
		X_{iso}=\beta|\psi^{+}\rangle \langle\psi^{+}|+\frac{1-\beta}{d^2}I, 
	\end{align*}
	where $|\psi^{+}\rangle =\frac{1}{\sqrt{d}}\sum_{i=0}^{d-1}|ii\rangle $ with $-\frac{1}{d^2-1}\leq\beta\leq 1$ \\
	
	The eigen values of $X_{iso}$ are $\frac{\beta(d^2-1)+1}{d^2}$ and $\frac{1-\beta}{d^2}$. This state is separable for $-\frac{1}{d^2-1}\leq\beta\leq \frac{1}{d+1}$ and entangled for $\frac{1}{d+1}<\beta\leq 1$. All entangled isotropic states are useful for teleportation. For the given separable range, the maximum eigen values of $X_{iso}$ is $\frac{1}{d}$, which leads to the conclusion that for the separable range of the parameter $\beta$, isotropic state belong to $ \mathfrak{AF} $.
	\par \textit{Relation between different absolute classes-} 
	It is trivial to see that $ \mathfrak{AS} $ is a subset of $ \mathfrak{AF} $ as the FEF of a separable state lies in the range $\frac{1}{d^2}\leq F(\rho)\leq \frac{1}{d}$ \cite{properties}. That is a proper subset can be seen as 
	\begin{equation*}
		\rho = \frac{5}{10}|00\rangle \langle 00|+\frac{3}{10}|01\rangle \langle 01|+\frac{2}{10}|10\rangle \langle 10|  
	\end{equation*}
	This belongs to $ \mathfrak{AF} $ but not to $ \mathfrak{AS} $. \\
	
	Although we cannot make any general comment on the relation between $ \mathfrak{AF} $ and $ \mathfrak{AL} $, we observe that the separable two qubit isotropic state belongs to both $ \mathfrak{AF} $ and $ \mathfrak{AL}$. In a similar line, the separable two qubit Werner state\cite{EPR} belongs to both $ \mathfrak{AF} $ and $ \mathfrak{AC}$. 
	
	\par \textit{Decomposition of the witness operator-}
	The decomposition of the witness in terms of local observables is necessary for practical implementation in a laboratory \cite{pauli}. In this section we show the decomposition of two witnesses in terms of both Pauli  basis and polarization states. In the two qutrit scenario,we give the decomposition in terms of the Gell-Mann matrices \cite{gellmann}. 
	
	When we consider the witness $\mathbf{S_{1}}$ given in equation (\ref{sepwit}), we get the decomposition in terms of Pauli matrices as
	\begin{align*}
		\mathbf{S_{1}} &=\frac{1}{4}\:(I\otimes I-\sigma_{z}\otimes I- I\otimes \sigma_{z}-\sigma_{z}\otimes \sigma_{z})
	\end{align*}
	Similarly, $\mathbf{S_{2}}$ (\ref{entwit}), in terms of Pauli matrices is given by
	\begin{align*}
		\mathbf{S_{2}} &=\frac{1}{4}\:(I\otimes I-\sigma_{z}\otimes I+ I\otimes \sigma_{z}+\sigma_{z}\otimes \sigma_{z})
	\end{align*}
	
	One may also consider $|H\rangle=|0\rangle$, $|V\rangle=|1\rangle$, $|D\rangle=\frac{|H \rangle +|V \rangle}{\sqrt{2}}$, $|F\rangle=\frac{|H \rangle -|V \rangle}{\sqrt{2}}$ , $|L\rangle=\frac{|H \rangle +i\:|V \rangle}{\sqrt{2}}$, $|R\rangle=\frac{|H \rangle -i\:|V \rangle}{\sqrt{2}}$ as the horizontal, vertical, diagonal, and the left and right circular polarization states respectively. Using the basis we can  write the witness $\mathbf{S_{1}}$ as follows:
	\begin{align*}
		\mathbf{S_{1}} &=\frac{1}{2}\:(-|HH\rangle \langle HH|+|HV\rangle \langle HV|\\ &+|VH\rangle \langle VH|+|VV\rangle \langle VV|)
	\end{align*}
	
	Similarly, for the witness $\mathbf{S_{2}}$, one can write as follows:
	\begin{align*}
		\mathbf{S_{2}} &=\frac{1}{2}\:(|HH\rangle \langle HH|-|HV\rangle \langle HV|\\ & +|VH\rangle \langle VH|+|VV\rangle \langle VV|)
	\end{align*}
	In the two qutrit case,the witness operator $\mathbf{S_{3}}$ can also be decomposed in terms of the Gell-Mann matrices, which are 
	\[
	I=\begin{bmatrix}
		1  &  0 & 0      \\
		0  &  1 & 0 \\
		0  &  0 &  1  
	\end{bmatrix}
	, 
	\Lambda_{1}=\begin{bmatrix}
		0  &  1 & 0      \\
		1  &  0 & 0 \\
		0  &  0 &  0  
	\end{bmatrix} 
	,
	\Lambda_{2}=\begin{bmatrix}
		0  &  -i & 0      \\
		i  &  0 & 0 \\
		0  &  0 &  0  
	\end{bmatrix} ,
	\]
	\[
	\Lambda_{3}=\begin{bmatrix}
		1  &  0 & 0      \\
		0  &  -1 & 0 \\
		0  &  0 &  0  
	\end{bmatrix}
	, 
	\Lambda_{4}=\begin{bmatrix}
		0  &  0 & 1      \\
		0  &  0 & 0 \\
		1  &  0 &  0  
	\end{bmatrix} 
	,
	\Lambda_{5}=\begin{bmatrix}
		0  &  0 & -i      \\
		0  &  0 & 0 \\
		i  &  0 &  0  
	\end{bmatrix} ,
	\]
	\[
	\Lambda_{6}=\begin{bmatrix}
		0  &  0 & 0      \\
		0  &  0 & 1 \\
		0  &  1 &  0  
	\end{bmatrix}
	, 
	\Lambda_{7}=\begin{bmatrix}
		0  &  0 & 0      \\
		0  &  0 & -i \\
		0  &  i &  0  
	\end{bmatrix} 
	,
	\Lambda_{8}=\frac{1}{\sqrt{3}}\begin{bmatrix}
		1  &  0 & 0      \\
		0  &  1 & 0 \\
		0  &  0 &  -2  
	\end{bmatrix} 
	\]
The decomposition stands as:
 \begin{widetext}
	\begin{align*}
		S_{3}=\frac{1}{3}\:\left (I\otimes I -\frac{1}{\sqrt{2}}(\Lambda_{1}\otimes \Lambda_{6}-\Lambda_{2}\otimes \Lambda_{7})-2( A\otimes B)- B\otimes C \right ),
	\end{align*}
\end{widetext}
	where \begin{equation*}
		\begin{aligned}
			A & =\frac{1}{2} \Lambda_{3}+\frac{1}{2\sqrt{3}}\Lambda_{8}+\frac{1}{3}I \\
			B & =\frac{-1}{2} \Lambda_{3}+\frac{1}{2\sqrt{3}}\Lambda_{8}+\frac{1}{3}I \\
			C & =\frac{-1}{\sqrt{3}}\Lambda_{8}+\frac{1}{3}I
			\label{current_rel1}
		\end{aligned}
	\end{equation*}

	\section{Characterization in terms of Bloch parameters} \label{bloch}
	It is important to characterize the membership problem in $ \mathfrak{AF} $, in terms of Bloch parameters of the density matrix. In this context, we consider several classes below.
	
	Let $\rho$ be the general  density matrix in two qubits. The representation of $\rho$ in terms of Bloch parameters \cite{upper bound},
	\begin{widetext}
	\begin{align*}
		\rho=\frac{1}{4}\:I\otimes I+ \frac{1}{2}\sum_{i=1}^{3} a_i \sigma_i \otimes I + \frac{1}{2} \sum_{j=1}^{3} b_j I \otimes \sigma_j + \sum_{i,j=1}^{3} t_{ij} \sigma_i \otimes \sigma_j 
	\end{align*}
\end{widetext}
	Here $a_i=\frac{1}{2}\:Tr\left\{\rho\:\sigma_{i}\otimes I\right\}$, $b_j=\frac{1}{2}\:Tr\left\{\rho\:I\otimes \sigma_{j}\right\}$ and $t_{ij}=\frac{1}{4}\:Tr[\rho(\sigma_{i}\otimes \sigma_{j})]$, and $\sigma_{i}$
	are the Pauli matrices.
\par \textit{Class-I -} Here, we consider the  class of states in the Bloch sphere for which we have $\vec{a}=\vec{b}=0$, $T=[t_{ii}];(i=1,2,3)$.\\
	Let the density matrix of this class be given by $\rho_{ab}$. Eigen values of $\rho_{ab}$ are
		\begin{align*}
		&\left\{\frac{1}{4}\:(1-4\:(t_{11}+t_{22}+t_{33})),\frac{1}{4}\:(1+4\:(t_{11}+t_{22}-t_{33})),\right.\\
		&\left.\frac{1}{4}\:(1+4\:(t_{11}-t_{22}+t_{33})),\frac{1}{4}\:(1+4\:(t_{22}+t_{33}-t_{11})) \right\}
	\end{align*}
	The state $\rho_{ab}\in \mathfrak{AF}$ if  we have,
\begin{align*}
	&\text{Max eigen value} \leq \frac{1}{2}\\
	\implies& Max \{ \left[1-4(t_{11}+t_{22}+t_{33}),1+4(t_{11}+t_{22}-t_{33}),\right .\\
	&\left.1+4(t_{11}-t_{22}+t_{33}),1+4(t_{22}+t_{33}-t_{11}) \right] \} \leq 2\\
	\implies& 	Max \{ \left[-(t_{11}+t_{22}+t_{33}),(t_{11}+t_{22}-t_{33}),\right.\\
	&\left.(t_{11}-t_{22}+t_{33}),(t_{22}+t_{33}-t_{11}) \right] \} \leq \frac{1}{4}
\end{align*}
	
	\par \textit{Class-II-} Here, we consider the class of states which are diagonal in the computational basis. The density matrix representation of such a class of state is given by, 
	\begin{align*}
		\rho_{comp}=a|00\rangle \langle 00|+b|01\rangle \langle 01|+c|10\rangle \langle 01|+d|11\rangle \langle 11|,
	\end{align*}
	where $a\geq b\geq c\geq d$.
	Here, $t_{11}=t_{22}=0$ and $t_{33}=\frac{a-b-c+d}{4}$,
	$\vec{a}=(0,0,a_{3})$ with $a_{3}=\frac{a+b-c-d}{2}$ and $\vec{b}=(0,0,b_{3})$ with $b_{3}=\frac{a+c-b-d}{2}$.\\
	
	$\rho_{comp}\in \mathfrak{AF}$ if
	\begin{align*}
		&Max \{ (|a_{3}\mp b_{3}|\mp 2t_{33}) \} \leq \frac{1}{2} \\
		\implies& Max \{ (|a-d|+\frac{a-b-c+d}{2},\\ &
		|b-c|+\frac{a-b-c+d}{2}) \} \leq \frac{1}{2}
	\end{align*}
	
	\textbf{Case-1: when $a>d$,  $b>c$}\\
	
	$\rho_{comp}\in \mathfrak{AF}$ if
	\begin{align*}
		& Max \{ (|a-d|+\frac{a-b-c+d}{2},|b-c|+\frac{a-b-c+d}{2}) \} \leq \frac{1}{2}\\
		&\implies Max \{ (4a-1, 1-4c) \} \leq \frac{1}{2}
	\end{align*}
	
	\textbf{Case-2: when $a>d$, $b<c$}\\
	
	$\rho_{comp}\in \mathfrak{AF}$ if
	\begin{align*}
		& Max \{ (|a-d|+\frac{a-b-c+d}{2},|b-c|+\frac{a-b-c+d}{2}) \} \leq \frac{1}{2}\\
		&\implies Max \{ (4a-1, 1-4b) \} \leq \frac{1}{2}
	\end{align*}
	
	\textbf{Case-3: when  $a<d$,  $b<c$}\\
	
	$\rho_{comp}\in \mathfrak{AF}$ if
	\begin{align*}
		& Max \{ (|a-d|+\frac{a-b-c+d}{2},|b-c|+\frac{a-b-c+d}{2}) \} \leq \frac{1}{2}\\
		&\implies Max \{ (4d-1, 1-4b) \} \leq \frac{1}{2}
	\end{align*}
	
	\textbf{Case-4: when $a<d$, $b>c$}\\
	
	$\rho_{comp}\in \mathfrak{AF}$ if
	\begin{align*}
		& Max \{ (|a-d|+\frac{a-b-c+d}{2},|b-c|+\frac{a-b-c+d}{2}) \} \leq \frac{1}{2}\\
		&\implies Max \{ (4d-1, 1-4c) \} \leq \frac{1}{2}
	\end{align*} 
	
	\section{Bounds using purity}\label{purity}
	In the two qubit case,we use optimization procedure to answer the two questions (i) What is the maximum purity above which there cannot be any state which is in $ \mathfrak{AF} $? (ii) What is the minimum purity, below which all states are in $ \mathfrak{AF} $?
	
	\par \textit{Maximum purity-} Let us assume that $\lambda_{1}$, $\lambda_{2}$, $\lambda_{3}$ and $\lambda_{4}$ be the four eigen values of a density matrix which are arranged in descending order i.e., $\lambda_{1}\geq \lambda_{2}\geq\lambda_{3}\geq \lambda_{4}$. The condition of a state to be lie in $ \mathfrak{AF} $ is the maximum eigen value of the density matrix should be less than or equal to $\frac{1}{2}$. i.e., $\lambda_{Max}\leq \frac{1}{2}$. Now, the problem of maximum purity above which there can't be any $ \mathfrak{AF} $ class is framed as:
	
	\begin{equation*}
		\begin{aligned}
			& {\text{Maximize}}
			& &\lambda_{1}^{2}+\lambda_{2}^{2}+\lambda_{3}^{2}+\lambda_{4}^{2},\\
			& \text{subject to:} & (i)&~  \lambda_{1}\leq \frac{1}{2} \\
			& & (ii)& ~ \lambda_{1}+\lambda_{2}+\lambda_{3}+\lambda_{4}=1,\\ 
			& &(iii) &~ 1\geq\lambda_{1}\geq \lambda_{2}\geq  \lambda_{3}\geq \lambda_{4}\geq 0  \\
		\end{aligned}
	\end{equation*}
	
	A numerical computation yields the threshold value as $\frac{1}{2}$. i.e., states whose purity exceeds $\frac{1}{2}$, cannot belong to $ \mathfrak{AF} $.\\
	
	\par \textit{Minimum purity-} Now, consider the second question, 
	\begin{equation*}
		\begin{aligned}
			& {\text{Minimize}}
			& &\lambda_{1}^{2}+\lambda_{2}^{2}+\lambda_{3}^{2}+\lambda_{4}^{2},\\
			& \text{subject to:} &(i) &  \lambda_{1}> \frac{1}{2} \\
			& &(ii) & ~ \lambda_{1}+\lambda_{2}+\lambda_{3}+\lambda_{4}=1,\\ 
			& & (iii)&~ 1\geq\lambda_{1}\geq \lambda_{2}\geq  \lambda_{3}\geq \lambda_{4}\geq 0  \\
		\end{aligned}
	\end{equation*}
	
	In this case, after numerical computations, we have the threshold value as $\frac{1}{3}$, a purity value below which all states will belong to $ \mathfrak{AF}. $ 
	\section{Applications}\label{appl}
	
	\subsection{k-copy Nonlocality and Teleportation}
	
	Nonlocality is a distinctive feature of quantum theory besides entanglement. We say that a state is Bell nonlocal if it violates a Bell's inequality. However, sometimes a single copy of a quantum state might not violate a Bell's inequality. Whereas, if we take several copies of the state, then it violates the inequality. Precisely, we say a state $ \rho $ is $ k- $ copy nonlocal if $ \rho^{\otimes k} $ is nonlocal for some $ k $ \cite{kcopy}. 
	\par An interesting link between teleportation and $ k- $ copy nonlocality was provided in \cite{kcopy}. The link was established through the fully entangled fraction. FEF $ > 1/d $ constitutes a necessary and sufficient condition for quantum advantage in teleportation. It was shown in \cite{kcopy}, that the condition FEF $ > 1/d $ also provides a sufficient condition for a state to be $ k- $ copy nonlocal. 
	
	\par Consider now, the following two cases: \\
	(i) Consider that a density matrix $ \in \mathfrak{F}-\mathfrak{AF} $. Therefore, initially it is not useful for teleportation. However, it is found that after the application of a suitable unitary operation the transformed state has FEF $ > 1/d $. Thus the transformed state becomes useful for teleportation and also can be said to be $ k- $ copy nonlocal following \cite{kcopy}. The witness operators constructed here can be used to identify such states. \\
	(ii) If a density matrix belongs to $ \mathfrak{AF} $, which is identified through our criteria on eigenvalues, then such a state cannot be made useful for teleportation even with global unitary action. However, in this case we will not be able to comment on its $ k- $ copy nonlocality.

	As an example if we take the state mentioned in Eq. (\ref{kcopy}), then initially the state is not useful for teleportation as its FEF $ \le 1/3 $.However after application of an appropriate unitary gate it can be made useful. The witness $\mathbf{S_3}$ detects this phenomenon. Thus it can also be definitely said that the transformed state is $ k- $ copy nonlocal. However if one takes a state from $ \mathfrak{AF} $, then even with any unitary operation the state will remain useless pertaining to teleportation. 
	
	\subsection{Marginals of tripartite states}
	We study the reduced subsystems of a three qubit systems in the context of their membership in $ \mathfrak{AF} $. 
	\par \textit{Pure three qubit states-} A pure tripartite state is given by \cite{Acin}, 
	\begin{equation}\label{pure}
		|\psi_{3}\rangle=x_0 |000\rangle+x_1 e^{\imath \theta}|100\rangle+x_2 |101\rangle+x_3 |110\rangle\rangle+x_4|111\rangle
	\end{equation}
	where $x_i$$\geq$$0,$ $0$$\leq$$\theta$$\leq$$\pi$ and $\sum_{i=0}^4 x_i^2$$=$$1.$\\
	Removing first qubit, the reduced state is given by:
	\begin{equation}\label{red1}
		\left(\begin{array}{cccc}
			x_0^2+x_1^2&e^{\imath \theta}x_1\lambda_2&e^{\imath \theta}x_1x_3&e^{\imath \theta}x_1x_4\\
			e^{-\imath \theta}x_1x_2&x_2^2&x_2x_3&x_2x_4\\
			e^{-\imath \theta}x_1x_3&x_2x_3&x_3^2&x_3x_4\\
			e^{-\imath \theta}x_1x_4&x_2x_4&x_3x_4&x_4^2\\
		\end{array} \right)
	\end{equation}
	Eigenvalues of the reduced state(Eq.(\ref{red1})) are,  $0,0,\frac{1}{2}(1\pm\sqrt{S_1}),$  where $S_1$ is given by $1+4(x_2^2+x_3^2-x_4^2+x_1^2x_4^2+x_3^2(-1+x_1^2+2x_4^2)
	+x_2^2(-1+x_1^2+2x_3^2+2x_4^2))$. 
	
	\par Consequently, the reduced state $ \in \mathfrak{AF} $ if:
	\begin{equation}
		-x_4^2+x_1^2x_4^2+x_3^2(x_1^2+2x_4^2)
		+x_2^2(x_1^2+2x_3^2+2x_4^2)=-\frac{1}{4}
	\end{equation}
	Removing second qubit, the reduced state is given by:
	\begin{equation}\label{red2}
		\left(\begin{array}{cccc}
			x_0^2&0&e^{-\imath \theta}x_0 x_1&x_0x_2\\
			0&0&0&0\\
			e^{\imath \theta}x_0x_1&0&x_1^2+x_3^2&e^{\imath \theta}x_1x_2+x_3x_4\\
			x_0x_2&0&e^{-\imath \theta}x_1x_2+x_3x_4&x_2^2+x_4^2\\
		\end{array} \right)
	\end{equation}
	The eigenvalues are $0,0,\frac{1}{2}(1\pm\sqrt{S_2}),$  where $S_2$ is given by
	$ 1-4(x_0^2x_3^2+x_2^2x_3^2-2\cos(\theta)x_1x_2x_3x_4+x_0^2x_4^2+x_1^2x_4^2)$. 
	\par Consequently, the reduced state $ \in \mathfrak{AF} $ if,
	\begin{equation}
		x_0^2x_3^2+x_2^2x_3^2-2\cos(\theta)x_1x_2x_3x_4+x_0^2x_4^2+x_1^2x_4^2=\frac{1}{4}.
	\end{equation}
	
	Removing third qubit, the reduced state is given by:
	\begin{equation}\label{red3}
		\left(\begin{array}{cccc}
			x_0^2&0&e^{-\imath \theta}x_0x_1&x_0x_3\\
			0&0&0&0\\
			e^{\imath \theta}x_0x_1&0&x_1^2+x_2^2&e^{\imath \theta}x_1x_3+x_2x_4\\
			x_0x_3&0&e^{-\imath \theta}x_1x_3+x_2x_4&x_3^2+x_4^2\\
		\end{array} \right)
	\end{equation}
The eigenvalues are $0,0,\frac{1}{2}(1\pm\sqrt{S_3}),$  where $S_3$ is given by
	$  1-4(x_0^2x_2^2+x_2^2x_3^2-2\cos(\theta)x_1x_2x_3x_4+x_0^2x_4^2+x_1^2x_4^2)$
	\par Consequently, the reduced state $ \in \mathfrak{AF} $ if,
	\begin{equation}
		x_0^2x_2^2+x_2^2x_3^2-2\cos(\theta)x_1x_2x_3x_4+x_0^2x_4^2+x_1^2x_4^2=\frac{1}{4}
	\end{equation}
	
	\par \textit{Mixture of GHZ and W states in three qubits-} Consider the state,
	\begin{eqnarray}\label{red4}
		\varrho&=&p |\phi_1\rangle\langle \phi_1| +(1-p) |\phi_2\rangle\langle \phi_2|,\,\textmd{where},\nonumber\\
		|\phi_1\rangle &=&\frac{1}{\sqrt{2}}(|000\rangle+|111\rangle)\nonumber\\
		|\phi_2\rangle &=&\frac{1}{\sqrt{3}}(|001\rangle+|010\rangle+|100\rangle)
	\end{eqnarray}
	Reduced state by eliminating anyone of the three qubits is given by:
	\begin{equation}\label{red5}
		\left(\begin{array}{cccc}
			\frac{2+p}{6}&0&0&0\\
			0&\frac{1-p}{3}&\frac{1-p}{3}&0\\
			0&\frac{1-p}{3}&\frac{1-p}{3}&0\\
			0&0&0&\frac{p}{2}\\
		\end{array} \right)
	\end{equation}
	Eigen values of the reduced matrix are,  $0,\frac{2}{3}(1-p),\frac{p}{2},\frac{2+p}{6}.$ Therefore, the reduced state $ \in \mathfrak{AF} $ for $p \in [0.25,1].$
	
	\textit{A state in three qutrits-} Consider now the state \cite{threequtrits}, 
	\begin{equation}\label{3qutrits}
		\rho_{3 \otimes 3 \otimes 3}=\alpha |GHZ_3 \rangle \langle GHZ_3|+\beta |\psi_3 \rangle \langle \psi_3|+\frac{1-\alpha-\beta}{27} I
	\end{equation}
    where, $ |GHZ_3\rangle=\frac{1}{\sqrt{3}}\sum_{i=0}^2 |iii\rangle $ and \\
    $ |\psi_3\rangle = \frac{|012\rangle+|021\rangle+|102\rangle+|120\rangle+|201\rangle+|210\rangle}{\sqrt{6}} $\\

	Each of the three reduced states is given by:
	
	\vskip 2cm
	
	\begin{widetext}
		
		\begin{equation}
			\left(\begin{array}{ccccccccc}
				
				\frac{1 + 2 \alpha -\beta}{9}&0&0&0&0&0&0&0&0\\
				
				0&\frac{2 - 2 \alpha +\beta}{18}&0&\frac{\beta}{6}&0&0&0&0&0\\
				
				0&0&\frac{2 - 2 \alpha +\beta}{18}&0&0&0&\frac{\beta}{6}&0&0\\
				
				0&\frac{\beta}{6}&0&\frac{2 - 2 \alpha +\beta}{18}&0&0&0&0&0\\
				
				0&0&0&0&\frac{1 + 2 \alpha -\beta}{9}&0&0&0&0\\
				
				0&0&0&0&0&\frac{2 - 2 \alpha +\beta}{18}&0&\frac{\beta}{6}&0\\
				
				0&0&\frac{\beta}{6}&0&0&0&\frac{2 - 2 \alpha +\beta}{18}&0&0\\
				
				0&0&0&0&0&\frac{\beta}{6}&0&\frac{2 - 2 \alpha +\beta}{18}&0\\
				
				0&0&0&0&0&0&0&0& \frac{1 + 2 \alpha -\beta}{9}\\
				
			\end{array} \right)
		\end{equation}
	\end{widetext}
	Eigen values of the reduced matrix are: $\{\frac{1}{9}(1 - \alpha -\beta),\frac{1}{9}(1 + 2 \alpha -\beta),\frac{1}{9}(1 - \alpha + 2\beta)\},$
	where each eigen value is of algebraic multiplicity $3.$ Reduced state will belong to $ \mathfrak{AF} $ if:

	\begin{equation}
		Max \{\frac{1}{9}(1 - \alpha -\beta),\frac{1}{9}(1 + 2 \alpha -\beta),\frac{1}{9}(1 - \alpha + 2\beta) \}\leq \frac{1}{3}
	\end{equation}

    Clearly, all the eigenvalues are less than $ 1/3 $, indicating that all the reduced states of the family(Eq.(\ref{3qutrits})) belong to $ \mathfrak{AF} $.

    The results above have a physical interpretation. Consider that three parties Alice, Bob and Charlie share a tripartite state. Now, if Alice and Bob collaborate i.e., they come together in a single lab and apply a global unitary on their combined system, then they can generate a state whose FEF becomes greater than the benchmark $ 1/d $ and thus the transformed state becomes useful for the task of teleportation. However, as our result shows, there will be instances in which even if two parties collaborate and apply a global unitary, the FEF cannot be enhanced beyond the benchmark.

	\section{Conclusions}\label{con}
	
	Certain manifestations of quantum states rely on the choice of basis of the underlying Hilbert space. One such important feature of a quantum state is it's fully entangled fraction, which measures its overlap with the maximally entangled state. A quantum state can have different FEF depending upon the choice of the basis. Basis transformations are brought about by unitary operations. In the current submission,we characterize the quantum states whose FEF cannot be increased beyond $ 1/d $, even with global unitary action. The benchmark $ 1/d $ is pertinent to several important information processing tasks that include teleportation. Through the prescription laid here, one can identify states whose FEF can be increased using appropriate unitary gates. However, if a state belongs to the absolute class as we show here, then even with global unitary operation, its FEF cannot be enhanced beyond the threshold. Illustrations from different dimensions highlight the necessary underpinnings.
	\par The work has been shown to carry significant implications if one observes the marginals of a three party system. Under certain restrictions the reduced subsystems have been shown to belong to the absolute class. This entails that even if two parties collaborate to apply a nonlocal unitary operator on their combined system, they cannot breach the threshold FEF value. Relevance of our work to $ k- $ copy nonlocality and teleportation is also discussed. 
	\par One useful direction of future research can be to study this behaviour of FEF in multipartite systems and also for bipartite $ d_1 \otimes d_2 $ systems where $ d_1 \neq d_2 $. 
	
	\section*{Acknowledgement}
	Tapaswini Patro would like to acknowledge the support from DST-Inspire fellowship No. DST/INSPIRE Fellowship/2019/IF190357. M.A.S. acknowledges the National Key  R \& D Program of China, Grant No. 2018YFA0306703.
	
	\section*{Author Contributions} 
	
	All authors contributed equally to the paper. 
	
	\section*{Data Availability Statement} 
	
	Data sharing is not applicable to this article as no datasets were generated or analysed during the current study.
	
	\bibliography{basename of .bib file}

\end{document}